\begin{document}

\theoremstyle{plain}
\newtheorem{theorem}{Theorem}
\newtheorem{corollary}{Corollary}
\newtheorem{lemma}{Lemma}
\newtheorem{proposition}{Proposition}

\theoremstyle{remark}
\newtheorem{remark}{Remark}

\newcommand{\adiffop}{A$\Delta$O}
\newcommand{\adiffops}{A$\Delta$Os}

\newcommand{\be}{\begin{equation}}
\newcommand{\ee}{\end{equation}}

\newcommand{\Z}{{\mathbb Z}}
\newcommand{\N}{{\mathbb N}}
\newcommand{\C}{{\mathbb C}}
\newcommand{\Cs}{{\mathbb C}^{*}}
\newcommand{\R}{{\mathbb R}}
\newcommand{\Q}{{\mathbb Q}}

\newcommand{\cK}{{\mathcal K}}
\newcommand{\cC}{{\mathcal C}}
\newcommand{\cF}{{\mathcal F}}

\newcommand{\re}{{\rm Re}\, }
\newcommand{\im}{{\rm Im}\, }

\newcommand{\Ln}{{\rm Ln}}
\newcommand{\Arg}{{\rm Arg}}

\newcommand{\Ai}{{\rm Ai}}
\newcommand{\Bi}{{\rm Bi}}

\title[A Product formula for a quartic oscillator]{A product formula for the eigenfunctions of a quartic oscillator}

\author{Martin Halln\"as}
\address{Department of Mathematical Sciences, Loughborough University, Leicestershire LE11 3TU, UK}
\email{m.a.hallnas@lboro.ac.uk}

\author{Edwin Langmann}
\address{Department of Theoretical Physics, KTH Royal Institute of Technology, SE-106 91 Stockholm, Sweden}
\email{langmann@kth.se}

\dedicatory{In memory of Vadim Kuznetsov}

\begin{abstract}
We consider the Schr\"odinger operator on the real line with an even quartic potential. Our main result is a product formula of the type
\begin{equation*}
	\psi_k(x)\psi_k(y) = \int_\R \psi_k(z)\cK(x,y,z)dz
\end{equation*}
for its eigenfunctions $\psi_k$. The kernel function $\cK$ is given explicitly in terms of the Airy function $\Ai(x)$, and it is positive for appropriate parameter values. As an application, we obtain a particular asymptotic expansion of the eigenfunctions $\psi_k$.
\end{abstract}

\maketitle

\section{Introduction}
In this paper we shall be concerned with solutions to the eigenvalue problem consisting of the differential equation
\be\label{diffEq}
	H(a,\lambda)\psi\equiv -\frac{d^2\psi}{dx^2} + \left(ax^2 + \frac{\lambda}{2}x^4\right)\psi = E\psi,\quad x\in\R,
\ee
and boundary conditions
\be\label{bConds}
	\lim_{x\to\pm\infty}\psi(x) = 0.
\ee
Throughout the paper we shall assume that $a\in\R$ and $\lambda>0$. As is well known, this eigenvalue problem has a discrete spectrum consisting of real eigenvalues $E_0<E_1<\cdots<E_k<\cdots$ with $E_k\to\infty$ as $k\to\infty$, and all eigenspaces are one-dimensional; see, e.g., Berezin and Shubin \cite{BS91}.

For any (complex) value of $E$, the differential equation \eqref{diffEq} has a unique solution $\psi=\psi(a,\lambda,E;x)$ with asymptotic behaviour
\be\label{asymp}
	\psi \sim x^{-1}\exp\left(-\frac{(\lambda/2)^{1/2}}{3}x^3-\frac{a}{2(\lambda/2)^{1/2}}x\right),\quad x\to +\infty,
\ee
and the asymptotic behaviour of $\psi^\prime$ is given by the derivative of the right-hand side of \eqref{asymp}; see, e.g., Hsieh and Sibuya \cite{HS66} or the appendix by Dicke in \cite{Sim70}. The eigenvalues $E=E_k$ of \eqref{diffEq}--\eqref{bConds} are precisely those values of $E$ for which $\psi$ also decays as $x\to-\infty$, which happens if and only if $\psi$ is either even or odd. We let $\psi_k(a,\lambda;x)\equiv\psi(a,\lambda,E_k;x)$ denote the corresponding eigenfunctions normalised by \eqref{asymp}. We note that this choice of normalisation is motivated by the relative simplicity of the corresponding product formula. Since $\psi_k$ has $k$ real zeros, we have that $\psi_k$ is even (odd) if and only if $k$ is even (odd).

We shall have occasion to make use of the well known invariance of the differential equation \eqref{diffEq} under the scaling transformation
\be\label{scaling}
	(a,\lambda,E,x)\to (\beta^{-2/3}a,\beta^{-1}\lambda,\beta^{-1/3}E,\beta^{1/6}x),\quad \beta > 0.
\ee
Insisiting on the asymptotic behaviour \eqref{asymp}, this entails that the (decaying) solutions of \eqref{diffEq} transform according to
\be\label{psiScale}
	\psi(a,\lambda,E;x) = \beta^{1/6}\psi\big(\beta^{-2/3}a,\beta^{-1}\lambda,\beta^{-1/3}E;\beta^{1/6}x\big),
\ee
so that no generality would be lost by fixing $\lambda=1$, say, once and for all. However, we find it instructive not to do so, since many of our formulas depend in a rather non-obvious manner on $\lambda$.

We are now in a position to describe our results. Before doing so we find it worth emphasising that they do not have non-trivial limits as $\lambda\to 0$. In other words they are non-perturbative in the sense that the harmonic oscillator is not contained as a limiting case.

We note that any product $\psi_k(x)\psi_k(y)$ of two eigenfunctions corresponding to the same eigenvalue $E_k$ satisfies the partial differential equation
\be\label{PDE}
	\big(H(x) - H(y)\big)\Psi(x,y) = 0.
\ee
Since the eigenfunctions are either even or odd, such a product is invariant under the (reflection) group $G$ generated by the  interchange $x\leftrightarrow y$ and the simultaneous change of sign $(x,y)\to(-x,-y)$. It is thus natural to consider solutions of \eqref{PDE} in terms of the variables 
\be\label{variables}
	u = xy,\quad v = (x^2+y^2)/2,
\ee
which generate the polynomial invariants of $G$. Although it will be important for us to allow any (real) values for $x$ and $y$, we note that the mapping $(x,y)\to(u,v)$ is one-to-one when restricted to regions such as $-y<x<y$.

In Section \ref{Sec2} we show that \eqref{PDE} admits solutions by separation of variables in terms of the variables $(u,v)$. More specifically, these solutions consist of an elementary function of $u$ and a function of $v$ given in terms of a solution of Airy's differential equation. We note that (up to conjugation and change of variable) Airy's differential equation is of confluent hypergeometric type whereas the differential equation \eqref{diffEq} is of triconfluent Heun type. For our purposes, the key property of these solutions is the fact that they depend on three variables $(x,y,z)$, where $z$ is essentially the separation constant, in a symmetric manner.

In Section \ref{Sec3} we establish a product formula of the type
\begin{equation}\label{intProdForm}
	\psi_k(x)\psi_k(y) = \int_\R \psi_k(z)\cK(x,y,z)dz
\end{equation}
for the eigenfunctions $\psi_k$ of the eigenvalue problem \eqref{diffEq}--\eqref{bConds}; see Theorem \ref{Thm} for the precise statement. The kernel function $\cK(x,y,z)$ is one of the solutions of \eqref{PDE} obtained in Section \ref{Sec2}, given in terms of the standard solution $\Ai(x)$ of Airy's equation.

Such product formulas are useful in various contexts. For example, they can be used to define a convolution product that is important for the harmonic analysis of the corresponding eigenfunction expansions. Since the role played by product formulas in harmonic analysis is beyond the scope of this paper (and indeed somewhat beyond our expertise), we have not attempted to provide a comprehensive list of relevant references. Instead, we would like to mention two sources of inspiration that were particularly important for us when writing this paper, and which contain numerous references to the wider literature. One is a paper by Connett et al.~\cite{CMS93} on product formulas and convolutions for angular and radial spheroidal wave functions, and the other is work of Koornwinder and collaborators on Jacobi functions and polynomials, as presented, e.g., in Koornwinder's paper \cite{Kor84}.

A product formula of the form \eqref{intProdForm} can also be regarded as an integral equation for the eigenfunctions $\psi_k$ after a suitable choice of the variable $x$, say. At this point we should mention that integral equations for functions from the Heun class, as well as kernel functions in the sense of solutions of a corresponding PDE of the form \eqref{PDE}, has a long history. For example, an integral equation satisfied by Lam\'e functions was established in a paper by Whittaker \cite{Whi15} published in 1915; see also Section 23.6 of Whittaker and Watson \cite{WW35}. More details and references to the rather extensive literature on the subject can, e.g., be found in Section 31.10 of \cite{Dig10} and in a book by Slavyanov and Lay \cite{SL00}. Of particular relevance to this paper is a paper by Kazakov and Slavyanov \cite{KS96}, which contains numerous kernel functions and integral equations for functions of Heun type as well as confluent cases thereof. Specifically, an integral equation we obtain for the eigenfunctions $\psi_k$ with $k$ even is closely related to an integral equation given in Theorem 5 in {\it loc.~cit.}; see Section \ref{Sec5} for further details.

As an application of the product formula, we deduce in Section \ref{Sec4} an asymptotic expansion of the eigenfunctions $\psi_k$ with respect to a particular (asymptotic) sequence of functions given by integrals involving the Airy function $\Ai(x)$.

In Section \ref{Sec5} we present a brief outlook on possible avenues for future research that we find particularly interesting.

Finally, in Appendix \ref{AppA} we collect properties of $\Ai(x)$ we have occasion to use in the main text, and in Appendix \ref{AppB} we deduce bounds on the kernel function that is used in Section \ref{Sec3}.

\medskip

\noindent {\bf Note.} Two and a half years after completing this paper we learnt from T.~T.~Truong that he obtained a product formula for the quartic oscillator already in 1974. More precisely, in his papers \cite{Tru74,Tru75} Truong showed that the orthonormal eigenfunctions $\psi_k(x)$ of the Schr\"odinger operator
\be
H\equiv \frac{1}{2}\left(-\frac{d^2}{dx^2}+\omega^2x^2+x^4\right)
\ee
satisfy a product formula
\be
\psi_k(x)\psi_k(y)=\mu_k\int_{-\infty}^\infty \psi_k(z){}^{\cosh xyz}_{\sinh xyz} \Ai\left(\frac{\omega^2+x^2+y^2+z^2}{2^{2/3}}\right)dz
\ee
for some constants $\mu_k$, where one should choose $\cosh$ and $\sinh$ for even and odd eigenfunctions $\psi_k$, respectively. In addition, adding a potential term $L(L+1)/2x^2$, he obtained a similar product formula for the corresponding eigenfunctions, in which $\cosh$ or $\sinh$ is replaced by a (modified) Bessel function. Moreover, in the recent interesting paper \cite{Tru16}, he used his earlier results to establish new integral properties of the quartic oscillator eigenfunctions. We would like to thank T.~T.~Truong for making us aware of his work on the quartic oscillator and for providing us with copies of his papers.

\section{Kernel functions}\label{Sec2}
In this section we demonstrate that the partial differential equation \eqref{PDE} admits solutions by separation of variables in terms of the variables $(u,v)$ given by \eqref{variables}.

To this end we observe that
\be
	H(x) - H(y) = (x^2-y^2)\left(\frac{\partial^2}{\partial u^2} - \frac{\partial^2}{\partial v^2} + a + \lambda v\right).
\ee
Hence, we have a solution of the form $\Psi(x,y)=\Phi(u(x,y),v(x,y))$ of equation \eqref{PDE} if the function $\Phi(u,v)$ satisfies the partial differential equation
\be
	\left(\frac{\partial^2}{\partial u^2} - \frac{\partial^2}{\partial v^2} + a + \lambda v\right)\Phi(u,v) = 0.
\ee
Moreover, assuming that $\Phi(u,v)=f(u)g(v)$, it is clear that we obtain a solution of the latter equation by requiring that the functions $f$ and $g$ satisfy the ordinary differential equations
\be\label{ODEs}
	\frac{d^2 f}{du^2} = \kappa^2 f,\quad \frac{d^2 g}{dv^2} = (a + \lambda v + \kappa^2)g,
\ee
for some separation constant $\kappa^2$. It follows that $f(u)$ should be a linear combination of $e^{\kappa u}$ and $e^{-\kappa u}$. Furthermore, introducing a function $w$ by requiring that
\be
	g(v)=w(\lambda^{1/3}v+(\kappa^2+a)/\lambda^{2/3}),
\ee
it is readily verified that $w(x)$ should satisfy Airy's differential equation 
\be\label{AiryEq}
	\frac{d^2 w}{dx^2} = xw.
\ee
Introducing a new variable $z$ by setting $\kappa=(\lambda/2)^{1/2}z$ and reverting to the variables $x$ and $y$, we find that the solutions $f$ and $g$ of the equations \eqref{ODEs} are invariant under any permutations of the three variables $(x,y,z)$. We thus have the following result.

\begin{proposition}\label{kernelProp}
Let $w(x)$ be a solution of Airy's equation \eqref{AiryEq}. Then the function
\begin{multline}\label{cK}
	\cK(a,\lambda;x,y,z)\\ \equiv \exp\big((\lambda/2)^{1/2}xyz\big)w\left(\frac{\lambda^{1/3}}{2}(x^2+y^2+z^2)+\frac{a}{\lambda^{2/3}}\right)
\end{multline}
satisfies the identities
\be\label{Kids}
	H(x)\cK(x,y,z) = H(y)\cK(x,y,z) = H(z)\cK(x,y,z).
\ee
\end{proposition}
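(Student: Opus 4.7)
The plan is to exploit the manifest permutation symmetry of the expression in \eqref{cK} together with the computation already laid out in the paragraphs preceding the proposition. The main observation is that $\cK(a,\lambda;x,y,z)$ is invariant under any permutation of $(x,y,z)$: the factor $\exp((\lambda/2)^{1/2}xyz)$ is symmetric because $xyz$ is, and the argument of $w$ depends only on $x^2+y^2+z^2$. Consequently, once we show any one of the three equalities in \eqref{Kids}, the other two will follow by relabelling the variables.

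The heart of the proof is thus to verify $H(x)\cK = H(y)\cK$, and the computation has in effect been carried out in the discussion preceding the proposition. I would make it rigorous as follows. Setting $\kappa = (\lambda/2)^{1/2}z$, the function $\cK$ has the form $f(u)g(v)$ with $u=xy$, $v=(x^2+y^2)/2$, where $f(u)=e^{\kappa u}$ satisfies $f''=\kappa^2 f$ and $g(v)=w\bigl(\lambda^{1/3}v+(\kappa^2+a)/\lambda^{2/3}\bigr)$ satisfies the second ODE in \eqref{ODEs} as a direct consequence of Airy's equation \eqref{AiryEq}. Substituting $\Psi(x,y)=f(u)g(v)$ into the factorisation
\begin{equation*}
  H(x)-H(y) = (x^2-y^2)\left(\frac{\partial^2}{\partial u^2}-\frac{\partial^2}{\partial v^2}+a+\lambda v\right),
\end{equation*}
the two ODEs combine to give exactly $\bigl(H(x)-H(y)\bigr)\cK=0$.

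To conclude, I would invoke the symmetry. Since $\cK(x,y,z)=\cK(x,z,y)$, interchanging the roles of $y$ and $z$ in the identity just established yields $\bigl(H(x)-H(z)\bigr)\cK(x,y,z)=0$. Combined with $H(x)\cK=H(y)\cK$ this gives the full chain \eqref{Kids}.

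I do not anticipate a real obstacle: the derivation is essentially algebraic, and the only minor bookkeeping is to trace the change of variable $\kappa\mapsto(\lambda/2)^{1/2}z$ through the argument of $w$ to check that the expression $\tfrac{\lambda^{1/3}}{2}(x^2+y^2+z^2)+a/\lambda^{2/3}$ really does result — so that the separation constant has been absorbed in a fully symmetric way. This is the step that makes the three-variable symmetry work and therefore makes the trick of permuting variables available; everything else is routine.
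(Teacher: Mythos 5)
Your proposal is correct and follows essentially the same route as the paper: the paper's argument is precisely the separation of variables in $(u,v)$ using the factorisation $H(x)-H(y)=(x^2-y^2)\bigl(\partial_u^2-\partial_v^2+a+\lambda v\bigr)$, the reduction of the $v$-equation to Airy's equation, and the observation that after setting $\kappa=(\lambda/2)^{1/2}z$ the resulting function is symmetric in $(x,y,z)$, which yields all three identities in \eqref{Kids}. Your bookkeeping of the substitution, giving the argument $\tfrac{\lambda^{1/3}}{2}(x^2+y^2+z^2)+a/\lambda^{2/3}$, matches the paper exactly.
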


Of course the identities \eqref{Kids} remain valid under multiplication of the function $\cK$ by any constant. In the next section we shall make use of this freedom to choose a particularly convenient kernel function.

\section{Product formula}\label{Sec3}
In this section we establish the product formula for the eigenfunctions $\psi_k$, $k=0,1,\ldots$, of the eigenvalue problem \eqref{diffEq}--\eqref{bConds}. As before, we normalise the eigenfunctions by insisting on the asymptotic behaviour \eqref{asymp}. The relevant kernel function is obtained from \eqref{cK} after multiplication by $\lambda^{1/3}$ and setting
\be
	w(x) = \Ai(x),\quad x\in\R,
\ee
where the standard solution $\Ai(x)$ of Airy's equation \eqref{AiryEq} can, in particular, be defined by its integral representation \eqref{intRep}. For a detailed account of this and other standard Airy functions see, e.g., Chapter 9 in \cite{Dig10} and references therein. The precise result can now be formulated as follows.

\begin{theorem}\label{Thm}
For $a\in\R$, $\lambda>0$ and $k=0,1,\ldots$, we have a product formula
\be\label{prodForm}
	\psi_k(x)\psi_k(y) = \int_\R \psi_k(z)\cK(x,y,z)dz
\ee
with kernel function
\begin{multline}\label{cKThm}
	\cK(a,\lambda;x,y,z)\\ \equiv \lambda^{1/3}\exp\big((\lambda/2)^{1/2}xyz\big)\Ai\left(\frac{\lambda^{1/3}}{2}(x^2+y^2+z^2)+\frac{a}{\lambda^{2/3}}\right).
\end{multline}
Moreover, as long as $a>\lambda^{2/3}a_1$ with $a_1=-2.3381074105\ldots$ being the first zero of $\Ai(x)$, the kernel function $\cK(x,y,z)$ is positive for all $x,y,z\in\R$.
\end{theorem}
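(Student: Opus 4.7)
The plan is to prove the product formula by defining $F(x,y) := \int_\R \psi_k(z)\cK(a,\lambda;x,y,z)\,dz$, showing it satisfies the correct pair of PDEs, and then matching asymptotics. First I would use the super-exponential decay of $\psi_k$ furnished by \eqref{asymp} together with pointwise bounds on $\cK$ and its partial derivatives (presumably supplied by Appendix~\ref{AppB}) to justify absolute convergence of $F$ and second-order differentiation under the integral sign in $x$ and $y$. Proposition~\ref{kernelProp} lets us exchange $H(x)$ for $H(z)$ inside the integrand, so integrating by parts twice in $z$ and discarding boundary terms via the same bounds yields
\begin{equation*}
	H(x) F(x,y) = \int_\R \psi_k(z) H(z) \cK(x,y,z)\,dz = \int_\R (H(z)\psi_k(z)) \cK(x,y,z)\,dz = E_k F(x,y),
\end{equation*}
and the analogous argument based on $H(y)\cK = H(z)\cK$ gives $H(y)F = E_k F$.

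I would next deduce $F(x,y) = c\,\psi_k(x)\psi_k(y)$ for a single scalar $c$. The estimates from step one also produce $F(x,y)\to 0$ as $|x|\to\infty$ for fixed $y$, and symmetrically in $y$. For each fixed $y$, $F(\cdot,y)$ then lies in the one-dimensional space of solutions of $H(x)\psi = E_k\psi$ decaying at $+\infty$, which is spanned by the normalised solution $\psi_k$ (since $E_k$ is an eigenvalue this solution also decays at $-\infty$). Hence $F(x,y) = c(y)\psi_k(x)$, and running the identical argument in $y$ forces $c(y) = c\,\psi_k(y)$.

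It remains to pin down $c = 1$, which is the main step, by matching asymptotics as $x\to+\infty$ with $y$ fixed. Writing $\alpha=\lambda^{1/3}/2$ and $\eta=(y^2+z^2+2a/\lambda)/x^2$, the standard Airy asymptotic from Appendix~\ref{AppA} together with $(1+\eta)^{3/2} = 1 + \tfrac{3}{2}\eta + O(\eta^2)$ combines with the cross term $(\lambda/2)^{1/2}xyz$ to give, uniformly for $z$ in compacta,
\begin{equation*}
	\cK(x,y,z) \sim \frac{\lambda^{1/3}}{2\sqrt{\pi}}\,\alpha^{-1/4}\, x^{-1/2} \exp\!\Big(\!-\frac{(\lambda/2)^{1/2}}{3}x^3 - \frac{a}{2(\lambda/2)^{1/2}}x\Big)\, e^{-(\lambda/2)^{1/2}(y-z)^2 x/2}.
\end{equation*}
Applying Laplace's method to the Gaussian in $z$ centred at $z=y$ extracts a factor $\psi_k(y)\sqrt{2\pi/((\lambda/2)^{1/2}x)}$; the $x$-powers collapse to $x^{-1}$ and the explicit prefactors cancel to $1$, so $F(x,y)\sim \psi_k(x)\psi_k(y)$ by \eqref{asymp} and $c=1$. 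The main obstacle is making this Laplace reduction genuinely rigorous, since the Airy expansion is uniform only on compacta in $z$; I would split the $z$-integral into a Laplace window $|z-y|\le x^{-1/2+\varepsilon}$, a bounded complement, and a far tail, bounding the last two via super-exponential decay of $\psi_k$ together with the Airy tail bounds from Appendix~\ref{AppB}.

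Finally, positivity is immediate: when $a>\lambda^{2/3}a_1$ the argument $\tfrac{\lambda^{1/3}}{2}(x^2+y^2+z^2)+a/\lambda^{2/3}$ of $\Ai$ in \eqref{cKThm} is bounded below by $a/\lambda^{2/3}>a_1$, and $\Ai$ is strictly positive on $(a_1,\infty)$, while $\exp((\lambda/2)^{1/2}xyz)>0$ is trivial.
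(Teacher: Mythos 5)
Your proposal is correct in substance, but it fixes the normalisation constant by a genuinely different argument than the paper. The first two steps (differentiation under the integral, swapping $H(x)$ for $H(z)$ via Proposition \ref{kernelProp}, integrating by parts, decay of $F$, and the one-dimensionality argument giving $F=c\,\psi_k(x)\psi_k(y)$) coincide with the paper's proof, which also reduces to $\lambda=8$ by scaling and uses the $x\leftrightarrow y$ symmetry of $\cK$ where you argue separately in $y$. The divergence is in proving $c=1$: the paper specialises $x=0$ (for $k$ even) or differentiates in $x$ and then sets $x=0$ (for $k$ odd), and extracts the $y\to\infty$ asymptotics of the resulting one-dimensional integral by repeated integration by parts together with Lemma \ref{AiIntLemma}, whose proof rests on Aspnes' closed-form evaluation \eqref{AspnesInt} of $\int_0^\infty\Ai(t+x)t^{-1/2}dt$; your route instead matches the $x\to+\infty$ asymptotics at a fixed generic $y$ (one should say explicitly: a $y$ with $\psi_k(y)\neq0$) by expanding the Airy factor and applying Laplace's method to the Gaussian $\exp\big(-(\lambda/2)^{1/2}x(y-z)^2/2\big)$; your bookkeeping of the prefactors is correct and they do collapse to $1$. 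Your approach is arguably more transparent about \emph{why} $c=1$ (the kernel concentrates at $z=y$ as $x\to\infty$) and needs no parity splitting nor Aspnes' identity, at the price of a more delicate uniformity analysis; the paper's approach trades Laplace's method for explicit Airy-moment asymptotics and stays entirely within elementary integration by parts. One caution on your tail estimates: the bound of Lemma \ref{bLemma} holds only with $\rho<1$, so $\exp\big(-\tfrac{2\rho}{3}(x^2+z^2)^{3/2}\big)$ is \emph{larger} than the leading term $x^{-3/2}\exp\big(-\tfrac{1}{3}(\lambda/2)^{1/2}x^3-\cdots\big)$ you are comparing against, and hence cannot by itself show the far tail is negligible; you instead need the full-strength Airy asymptotics \eqref{AiAsymp} together with an inequality such as $(x^2+z^2)^{3/2}\ge x^3+\tfrac{3}{2}xz^2$ to retain the Gaussian suppression in $|z-y|$ outside the Laplace window. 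With that repair the argument goes through, so this is a fixable detail of the sketch rather than a flaw in the approach.
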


\begin{proof}
We note that $\cK(a,\lambda;x,y,z)>0$ for $a>\lambda^{2/3}a_1$ since $\Ai(x)>0$ for $x>a_1$, see, e.g., Chapter 9 in \cite{Dig10}; and that convergence of the integral in \eqref{prodForm} is clear from the asymptotic behaviour of $\psi_k(z)$ and $\Ai(x)$ (c.f.~\eqref{asymp} and \eqref{AiAsymp}, respectively).

Having insisted on the asymptotic behaviour \eqref{asymp}, the eigenfunctions $\psi_k$ transform according to \eqref{psiScale} under the scaling transformation \eqref{scaling}. (Note that the statement of the Theorem is consistent with this transformation property.) Hence, it is sufficient to establish the product formula \eqref{prodForm} for $\lambda=8$, in which case the kernel function takes the particularly simple form
\be\label{kernel}
	\cK(a;x,y,z) = 2\exp(2xyz)\Ai(x^2+y^2+z^2+b),\quad b=a/4.
\ee

We continue by showing that the right-hand side of \eqref{prodForm} is a solution to the eigenvalue problem \eqref{diffEq}--\eqref{bConds} with $E=E_k$. Starting with the differential equation \eqref{diffEq}, we should establish that
\be\label{intEigEq}
	H(x)\int_\R \psi_k(z)\cK(x,y,z)dz = E_k\int_\R \psi_k(z)\cK(x,y,z)dz.
\ee
Assuming for now that we are allowed to differentiate twice under the integral sign, we can proceed as follows. First, acting with $H(x)$ on the kernel function and making use of the identities \eqref{Kids}, we find that the left-hand side of \eqref{intEigEq} equals
\be
	\int_\R \psi_k(z)H(z)\cK(x,y,z)dz.
\ee
Second, integrating by parts twice, we transfer the action of $H(z)$ to the factor $\psi_k(z)$. Third, invoking the eigenfunction property $H\psi_k=E_k\psi_k$, we arrive at the right-hand side of \eqref{intEigEq}.

These formal arguments are readily justified by the bounds on the kernel function $\cK(x,y,z)$, as well as derivatives thereof, that is provided by Lemma \ref{bLemma} in Appendix \ref{AppB}. Indeed, when combined with the asymptotic behaviour of $\psi_k(z)$, the bound \eqref{cKnBound} entails that the $n$th-order $x$-derivative of the integrand in \eqref{prodForm} is uniformly continuous for $(x,z)\in\R^2$. Consequently, we may differentiate any number of times under the integral sign. Furthermore, since $\cK(x,y,z)$ is invariant under the interchange $x\leftrightarrow z$, the bound holds true also for derivatives with respect to $z$. Taking into account the asymptotic behaviour of $\psi_k(z)$ and $\psi_k^\prime(z)$, it follows that our use of integration by parts is allowed.

To verify that the right-hand side of \eqref{prodForm} satisfies the boundary conditions \eqref{bConds} we fix $y\in\R$ and $\rho\in(0,1)$. By the Cauchy-Schwartz inequality and Lemma \ref{bLemma} from Appendix \ref{AppB} there exists a positive constant $C$ such that
\begin{multline}
	\left|\int_\R \psi_k(z)\cK(x,y,z)dz\right|\\ < C||\psi_k||\left(\int_\R \exp\left(-\frac{2\rho}{3}(x^2+z^2)^{3/2}\right)dz\right)^{1/2}.
\end{multline}
Making use of the elementary estimate $(x^2+z^2)^{3/2}\geq (|x|^3+|z|^3)/2$, we can further bound the right-hand-side by
\be
	C||\psi_k||\exp\left(-\frac{\rho}{6}|x|^3\right)\left(\int_\R \exp\left(-\frac{\rho}{3}|z|^3\right)dz\right)^{1/2},
\ee
which clearly decays to zero as $x\to\pm\infty$. Since all eigenspaces of \eqref{diffEq}--\eqref{bConds} are one-dimensional and the kernel function $\cK(x,y,z)$ is invariant under the interchange $x\leftrightarrow y$, we can thus conclude that
\be\label{prodFormWithConsts}
	\int_\R \psi_k(z)\cK(x,y,z)dz = c_k\psi_k(x)\psi_k(y)
\ee
for some constants $c_k$, which remain to be determined. To this end we shall distinguish between even and odd eigenfunctions $\psi_k$ or equivalently between $k=2m$ and $k=2m+1$ for some $m=0,1,\ldots$

Starting with the even cases, we have $\psi_{2m}(0)\neq 0$, so that we can fix $x=0$. This yields (c.f.~\eqref{kernel})
\be\label{evenProdForm}
	4\int_0^\infty \psi_{2m}(z)\Ai(y^2+z^2+b)dz = c_{2m}\psi_{2m}(0)\psi_{2m}(y).
\ee
We proceed to compute the leading asymptotic behaviour of the left-hand side as $y\to\infty$. Integrating by parts, we rewrite it as
\begin{multline}\label{twoTerms}
	4\psi_{2m}(0)\int_0^\infty \Ai(y^2+s^2+b)ds\\ + 4\int_0^\infty \psi^\prime_{2m}(z)\int_z^\infty \Ai(y^2+s^2+b)dsdz.
\end{multline}
Changing integration variable to $t=s^2$, we find that the first term is given by
\be
	2\psi_{2m}(0)\int_0^\infty \Ai(t+y^2+b)t^{-1/2}dt.
\ee
Invoking the case $n=0$ of Lemma \ref{AiIntLemma} in Appendix \ref{AppA} and making use of the fact that $\Gamma(1/2)=\sqrt{\pi}$, we deduce that this term is asymptotic to $\psi_{2m}(0)\psi_{2m}(y)$ as $y\to\infty$.

Integrating by parts once more while keeping in mind that $\psi_{2m}^\prime(0)=0$, we rewrite the second term in \eqref{twoTerms} as
\be
	4\int_0^\infty \psi_{2m}^{\prime\prime}(z)\int_z^\infty\int_{s_2}^\infty \Ai(y^2+s_1^2+b)ds_1ds_2dz.
\ee
Since $\Ai(x)>0$ for $x>a_1$, we can, for sufficiently large $y$, bound the modulus of this term by
\be
	4||\psi_{2m}^{\prime\prime}||_\infty \int_0^\infty\int_z^\infty\int_{s_2}^\infty \Ai(y^2+s_1^2+b)ds_1ds_2dz.
\ee
Note that the fact that $\psi_{2m}$ is a solution of the differential equation \eqref{diffEq} for $E=E_{2m}$ with asymptotic behaviour \eqref{asymp} ensures that $||\psi_{2m}^{\prime\prime}||_\infty<\infty$. Reversing the order of integration, we can carry out the integration over $z$ and $s_2$, and thus reduce the triple integral to
\be
	\frac{1}{4}\int_0^\infty \Ai(t+y^2+b)t^{1/2}dt,
\ee
where we have again changed integration variable to $t=s_1^2$. Hence, Lemma \ref{AiIntLemma} with $n=1$ entails that the second term in \eqref{twoTerms} does not contribute to the leading asymptotic behaviour of \eqref{evenProdForm}. It follows that $c_{2m}=1$ for each $m=0,1,\ldots$

In the odd cases we have $\psi_{2m+1}(0)=0$.  On the other hand $\psi_{2m+1}^\prime(0)\neq 0$, so that a suitable starting point is
\be\label{oddProdForm}
	8y\int_0^\infty \psi_{2m+1}(z) z\Ai(y^2+z^2+b)dz = c_{2m+1}\psi_{2m+1}^\prime(0)\psi_{2m+1}(y),
\ee
obtained from \eqref{prodFormWithConsts} by first differentiating both sides with respect to $x$ and then setting $x=0$. Integrating by parts twice, we rewrite the left-hand side as
\begin{multline}\label{twoTerms2}
	8y\psi_{2m+1}^\prime(0)\int_0^\infty \int_{s_2}^\infty s_1\Ai(y^2+s_1^2+b)ds_1ds_2\\ + 8y\int_0^\infty \psi_{2m+1}^{\prime\prime}(z)\int_z^\infty\int_{s_2}^\infty s_1\Ai(y^2+s_1^2+b)ds_1ds_2dz.
\end{multline}
Just as we did above, we shall consider the two terms separately. Interchanging the order of integration, carrying out the integration over $s_2$, and then setting $t=s_1^2$, we deduce that the first term is equal to
\be
	4y\psi_{2m+1}^\prime(0)\int_0^\infty \Ai(t+y^2+b)t^{1/2}dt,
\ee
which by Lemma \ref{AiIntLemma} is asymptotic to $\psi_{2m+1}^\prime(0)\psi_{2m+1}(y)$. Moreover, following the reasoning in the even case, we readily bound the modulus of the second term by
\be
	\frac{2}{3}y ||\psi_{2m+1}^{\prime\prime\prime}||_\infty \int_0^\infty \Ai(t+y^2+b)t^{3/2}dt
\ee
for sufficiently large $y$. Invoking again Lemma \ref{AiIntLemma}, we can thus conclude that also $c_{2m+1}=1$ for each $m=0,1,\ldots$ This concludes the proof of Theorem \ref{Thm}.
\end{proof}

As a straightforward corollary of Theorem \ref{Thm}, we now infer the eigenfunction expansion of the kernel function $\cK(x,y,z)$ defined by \eqref{cKThm}. It is clear from Lemma \ref{bLemma} in Appendix \ref{AppB} that $\cK(x,y,z)$ is square-integrable in $z$, say. Since the eigenfunctions in question form a complete orthogonal set in $L^2(\R,dz)$ (see, e.g., Berezin and Shubin \cite{BS91}), it follows that
\be
	\cK(x,y,z) = \sum_{k=0}^\infty\frac{1}{||\psi_k||^2}\hat{\cK}_k(x,y)\psi_k(z)
\ee
with
\be
	\hat{\cK}_k(x,y)\equiv \int_\R \psi_k(z)\cK(x,y,z)dz.
\ee
Making use of the product formula \eqref{prodForm}, we thus arrive at the following result.

\begin{corollary}\label{Cor}
For $a\in\R$ and $\lambda>0$, we have the eigenfunction expansion
\be
	\cK(x,y,z) = \sum_{k=0}^\infty\frac{1}{||\psi_k||^2}\psi_k(x)\psi_k(y)\psi_k(z).
\ee
\end{corollary}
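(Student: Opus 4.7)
The plan is to fix $x,y\in\R$ and expand $\cK(x,y,\cdot)$ in the orthonormal basis of $L^2(\R,dz)$ given by the (normalised) eigenfunctions $\psi_k/\|\psi_k\|$, then identify the Fourier coefficients using the product formula already proved.

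First I would check that, for each fixed $(x,y)\in\R^2$, the function $z\mapsto \cK(a,\lambda;x,y,z)$ belongs to $L^2(\R,dz)$. This is immediate from the bound provided by Lemma \ref{bLemma} in Appendix \ref{AppB}: the kernel decays like $\exp(-c|z|^3)$ for some $c>0$ as $|z|\to\infty$, uniformly on compact sets in $(x,y)$, so square-integrability in $z$ (and in fact integrability against any polynomially bounded weight) follows at once.

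Next I would invoke completeness of $\{\psi_k\}_{k\geq 0}$ as an orthogonal set in $L^2(\R,dz)$, which holds by the standard spectral theory for Schr\"odinger operators with confining polynomial potentials (see, e.g., Berezin--Shubin \cite{BS91}). This yields the $L^2$-convergent expansion
\be
	\cK(x,y,z) = \sum_{k=0}^\infty \frac{1}{\|\psi_k\|^2}\hat{\cK}_k(x,y)\,\psi_k(z),\qquad \hat{\cK}_k(x,y) = \int_\R \psi_k(z)\cK(x,y,z)\,dz.
\ee
At this point the key input is Theorem \ref{Thm}: the product formula \eqref{prodForm} identifies the coefficient $\hat{\cK}_k(x,y)$ as exactly $\psi_k(x)\psi_k(y)$, and substituting this into the expansion yields the claimed identity.

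The only point requiring any care is the justification that the $L^2$-expansion gives the equality in the form stated. This is a routine consequence of the $L^2$ theory, but if a stronger (e.g.\ pointwise) interpretation were desired, one could note that the symmetry of $\cK$ and the rapid decay of $\psi_k$ together with the bounds of Lemma \ref{bLemma} would allow for stronger convergence statements; however, since the corollary is formulated as an identity in $L^2(\R,dz)$, no such strengthening is needed. I do not foresee a genuine obstacle here: the corollary is essentially a reformulation of the product formula via Parseval's identity.
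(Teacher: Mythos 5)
Your argument is correct and coincides with the paper's own proof: square-integrability of $\cK(x,y,\cdot)$ via Lemma \ref{bLemma}, completeness of the eigenfunctions in $L^2(\R,dz)$ following Berezin--Shubin \cite{BS91}, and identification of the expansion coefficients $\hat{\cK}_k(x,y)=\psi_k(x)\psi_k(y)$ via the product formula of Theorem \ref{Thm}. Nothing further is needed.
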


\section{An asymptotic expansion of the eigenfunctions}\label{Sec4}
We recall that the eigenfunctions $\psi_k$ admit an asymptotic expansion of the form
\be\label{psiFullAsymp}
	\psi_k(x)\sim x^{-1}\exp\left(-\frac{(\lambda/2)^{1/2}}{3}x^3-\frac{a}{2(\lambda/2)^{1/2}}x\right)\left(1 + \sum_{n=1}^\infty B_{k,n}x^{-n/2}\right)
\ee
as $x\to +\infty$; see Hsieh and Sibuya \cite{HS66}. In principle, the coefficients $B_{k,n}$ can be computed by substituting the right-hand side of \eqref{psiFullAsymp} for $\psi$ in the differential equation \eqref{diffEq}.

Elaborating on the latter part of the proof of Theorem \ref{Thm}, we shall in this section establish a rather different asymptotic expansion of the eigenfunctions. To simplify the computations and formulae involved we fix $\lambda=8$, which entails no loss of generality, c.f.~\eqref{scaling}--\eqref{psiScale}. Again, it will be important to consider even and odd eigenfunctions separately. 

We start with even case. For $x=0$, the product formula \eqref{prodForm} takes the simple, but non-trivial, form
\be\label{evenProdForm2}
	\psi_{2m}(0)\psi_{2m}(y) = 4\int_0^\infty \psi_{2m}(z)\Ai(y^2+z^2+b)dz,\quad b=a/4.
\ee
Successively integrating by parts, we find that the right-hand side can be rewritten as the series
\begin{multline}\label{intSeries}
	4\psi_{2m}(0)\int_0^\infty \Ai(s_1^2+y^2+b)ds_1\\ + 4\psi_{2m}^{(2)}(0)\int_0^\infty\int_{s_3}^\infty\int_{s_2}^\infty \Ai(s_1^2+y^2+b)ds_1ds_2ds_3\\ +\cdots + 4\psi_{2m}^{(2n)}(0)\int_0^\infty\int_{s_{2n+1}}^\infty\cdots\int_{s_2}^\infty \Ai(s_1^2+y^2+b)ds_1\cdots ds_{2n+1}\\ + R_{2n}(y),
\end{multline}
where the remainder term is given by
\begin{multline}\label{R2n}
	R_{2n}(y)\\ = 4\int_0^\infty \psi_{2m}^{(2n+1)}(z)\int_z^\infty\int_{s_{2n+1}}^\infty\cdots\int_{s_2}^\infty \Ai(s_1^2+y^2+b)ds_1\cdots ds_{2n+1}dz
\end{multline}
for each $n=0,1,\ldots$ Note that the odd terms in the series vanish, since $\psi_{2m}$ is even. Reversing the order of integration in each term, we can carry out all integrations in \eqref{intSeries} except those over $s_1$. Then introducing the variable $t=s_1^2$, we obtain the series
\be\label{intSeries2}
	2\psi_{2m}(0)\phi_0(y) + 2\frac{\psi_{2m}^{(2)}(0)}{2!}\phi_1(y) +\cdots + 2\frac{\psi_{2m}^{(2n)}(0)}{(2n)!}\phi_n(y) + R_{2n}(y),
\ee
with
\be
\phi_n(b;y)\equiv \int_0^\infty \Ai(t+y^2+b)t^{n-1/2}dt,\quad n=0,1,\ldots
\ee
We note that the integrals appearing on the right-hand side can be evaluated explicitly by combining the formula \eqref{AiryIntRed}, which reduces each integral to the special case $n=0$, with Aspnes' evaluation \eqref{AspnesInt} of the $n=0$ integral. Moreover, it is clear from Lemma \ref{AiIntLemma} in Appendix \ref{AppA} that the functions $\phi_n(y)$ form an asymptotic sequence in the sense that
\be
\phi_{n+1}(y)/\phi_n(y)\to 0,\quad y\to\infty,
\ee
for all $n=0,1,\ldots$ Following standard notation, we shall thus write
\be
f(y)\sim \sum_{n=0}^\infty a_n\phi_n(y)
\ee
if, for each $m=0,1,\ldots$, we have
\be
\left(f(y)-\sum_{n=0}^m a_n\phi_n(y)\right)\Bigg/\phi_m(y)\to 0,\quad y\to\infty.
\ee

Integrating by parts once more in \eqref{R2n}, and making use of the fact that $\psi_{2m}^{(2n+1)}(0)=0$, we deduce the following bound for the remainder term:
\begin{multline}
	|R_{2n}(y)|\\ < 4||\psi_{2m}^{(2n+2)}||_\infty \int_0^\infty\int_z^\infty\int_{s_{2n+2}}^\infty\cdots\int_{s_2}^\infty \Ai(s_1^2+y^2+b)ds_1\cdots ds_{2n+2}dz\\ = 2\frac{||\psi_{2m}^{(2n+2)}||_\infty}{(2n+2)!}\phi_{n+1}(y),
\end{multline}
valid for $y^2+b\geq a_1$. Note that the asymptotic behaviour of $\psi_{2m}$, combined with the fact that $\psi_{2m}$ satisfies the differential equation \eqref{diffEq} for $E=E_{2m}$, ensures that $||\psi_{2m}^{(2n+2)}||_\infty<\infty$.

Starting instead from the observation
\be
	\psi_{2m+1}^\prime(0)\psi_{2m+1}(y) = 8y\int_0^\infty \psi_{2m+1}(z)z\Ai(y^2+z^2+b)dz,
\ee
the odd case can be treated in a similar manner. Doing so we arrive at the following result.

\begin{proposition}\label{Prop}
For $a\in\R$ and $m=0,1,\ldots$, we have the asymptotic expansions
\be\label{psi2mAsymp}
	\psi_{2m}(y)\sim \frac{2}{\psi_{2m}(0)}\sum_{n=0}^\infty \frac{\psi_{2m}^{(2n)}(0)}{(2n)!}\phi_n(y)
\ee
and
\be\label{psi2m+1Asymp}
	\psi_{2m+1}(y)\sim \frac{4y}{\psi_{2m+1}^\prime(0)}\sum_{n=0}^\infty \frac{\psi_{2m+1}^{(2n+1)}(0)}{(2n+1)!}\phi_{n+1}(y)
\ee
as $y\to\infty$ (where $b=a/4$ and we have set $\lambda=8$).
\end{proposition}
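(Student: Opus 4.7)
The plan is to re-use the integration-by-parts computation already sketched before the proposition statement. For the even case I would simply assemble the pieces: $2n+1$ successive integrations by parts applied to \eqref{evenProdForm2}, combined with the vanishing of odd-order derivatives of $\psi_{2m}$ at the origin and the Fubini-plus-substitution $t=s_1^2$ that collapses each nested Airy integral to the elementary $\phi_n(y)$, produce the finite sum \eqref{intSeries2}. The remainder bound $|R_{2n}(y)|\leq 2\|\psi_{2m}^{(2n+2)}\|_\infty \phi_{n+1}(y)/(2n+2)!$ is derived in the same block of text, with $\|\psi_{2m}^{(j)}\|_\infty<\infty$ for every $j$ following by repeatedly differentiating the ODE \eqref{diffEq} and invoking the asymptotics \eqref{psiFullAsymp}. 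Since Lemma \ref{AiIntLemma} gives $\phi_{n+1}(y)/\phi_n(y)\to 0$ as $y\to\infty$, dividing through by $\psi_{2m}(0)$ yields exactly \eqref{psi2mAsymp}.

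For the odd expansion I would run the same loop on the starting identity
\be
	\psi_{2m+1}^\prime(0)\psi_{2m+1}(y) = 8y\int_0^\infty z\psi_{2m+1}(z)\Ai(y^2+z^2+b)\,dz,
\ee
which is \eqref{oddProdForm} with the already-established value $c_{2m+1}=1$. Because $\psi_{2m+1}$ is odd its even-order derivatives vanish at $0$, so only the odd-order derivatives $\psi_{2m+1}^{(2n+1)}(0)$ contribute to the series produced by iterated integration by parts. The extra factor of $z$ inside the integral and the explicit $y$ outside raise the effective weight on the Airy factor by one, so the same Fubini-plus-substitution procedure produces $\phi_{n+1}(y)$ in place of $\phi_n(y)$, together with the prefactor $4y/\psi_{2m+1}^\prime(0)$ displayed in \eqref{psi2m+1Asymp}. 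The corresponding remainder estimate is obtained by integrating by parts one more time (the $\psi_{2m+1}^{(2n+2)}(0)=0$ boundary term then drops), and is of order $\phi_{n+2}(y)$ times $\|\psi_{2m+1}^{(2n+3)}\|_\infty$.

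The main obstacle is the combinatorial bookkeeping. One has to verify that after Fubini the $k$-fold nested antiderivative of $\Ai(y^2+s_1^2+b)$ along the simplex $z\leq s_{k}\leq\cdots\leq s_2\leq s_1$ reduces to a single integral in $s_1$ with weight $(s_1-z)^{k-1}/(k-1)!$, and that the non-linear substitution $t=s_1^2$ then delivers the correct power $t^{n-1/2}$ in the even case (respectively $t^{n+1/2}$ in the odd case, where the leading factor $s_1\,ds_1$ is absorbed into $\tfrac{1}{2}dt$) so that the final coefficients match $1/(2n)!$ and $1/(2n+1)!$ exactly. All interchanges of integration orders are justified by absolute convergence, which follows from the super-exponential decay of $\Ai$ on the positive axis together with the bounds on $\psi_k$ and its derivatives. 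Once this bookkeeping is settled, both asymptotic expansions follow directly from the definition of $\sim$ recalled in the excerpt.
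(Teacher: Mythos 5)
Your proposal is correct and follows essentially the same route as the paper: iterated integration by parts in \eqref{evenProdForm2} (respectively in \eqref{oddProdForm} with $c_{2m+1}=1$), Fubini plus the substitution $t=s_1^2$ to collapse the nested Airy integrals into the $\phi_n$'s, the remainder bound via one further integration by parts and $\|\psi_k^{(j)}\|_\infty<\infty$, and Lemma \ref{AiIntLemma} to conclude. The combinatorial bookkeeping you flag (simplex volume $s_1^{k-1}/(k-1)!$ for the main terms, the extra factor $z$ shifting $\phi_n$ to $\phi_{n+1}$ in the odd case) works out exactly as you describe and matches the paper's computation.
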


We find it worth noting that one can also obtain the result in \eqref{psi2mAsymp} by substituting for $\psi_{2m}(z)$ in the right-hand side of \eqref{evenProdForm2} its Taylor series
\be
	\psi_{2m}(z) = \sum_{n=0}^\infty \frac{\psi_{2m}^{(2n)}(0)}{(2n)!}z^{2n}
\ee
and integration formally term by term. This process is readily justified by a suitable application of Lemma \ref{AiIntLemma} in Appendix \ref{AppA}, and is closely related to the use of Watson's lemma in the context of Laplace integrals; see, e.g., Copson \cite{Cop65}. This point of view highlights the fact that the specialisation \eqref{evenProdForm2} of the product formula connects representations of $\psi_{2m}$ around the origin and infinity, respectively. The result in \eqref{psi2m+1Asymp} can be obtained in a similar manner.

\section{Outlook}\label{Sec5}
In this paper our focus has been on establishing the product formula in Theorem \ref{Thm} for the eigenfunctions of the eigenvalue problem \eqref{diffEq}--\eqref{bConds} of a quartic oscillator. Below we add a few remarks on possible avenues for future research that we find particularly interesting.

Let us begin by considering the product formula as an integral equation for the eigenfunctions. More specifically, assuming $k=2m$ for some $m=0,1,\ldots$ and setting $x=0$ in \eqref{prodForm}, we arrive at
\begin{multline}\label{intEq}
	\psi_{2m}(0)\psi_{2m}(y)\\ = 2\lambda^{1/3}\int_0^\infty \psi_{2m}(z)\Ai\left(\frac{\lambda^{1/3}}{2}(y^2+z^2)+\frac{a}{\lambda^{2/3}}\right)dz.
\end{multline}
(In the odd cases $k=2m+1$ a non-trivial result is obtained by differentiating the product formula before setting $x=0$.) In Section \ref{Sec4} we exploited this point of view to deduce a particular asymptotic expansion of the eigenfunctions. One may ponder what further properties can be gleaned from the integral equation \eqref{intEq}. We note, in particular, that the eigenvalue $\psi_{2m}(0)$ provides the link between the asymptotic normalisation \eqref{asymp} adopted in this paper and the normalisation given by
\be
	\psi(0) = 1,\quad \psi^{\prime}(0) = 0.
\ee
It is also of interest that \eqref{intEq} is related to the integral equation appearing in Theorem 5 of Kazakov and Slavyanov \cite{KS96}. More precisely, setting $\lambda=8$ and $a=-4f$, writing $\psi_{2m}(z)=z^{1/2}w(z^2)$, and changing integration variable to $t=z^2$, we arrive at the integral equation (65) in {\it loc.~cit.} for $w$ with $\gamma=1/2$. (Note, however, that the condition $\gamma-1>0$ in said theorem is then not satisfied.)

In this paper we have only made use of the kernel function obtained from \eqref{cK} by setting $w(x)=\Ai(x)$. It is natural to enquire whether any use could be found for the kernel functions given by other solutions of Airy's equation \eqref{AiryEq}. For example, judging by its asymptotic behaviour, one may expect that the Airy function $\Bi(x)$ is relevant to solutions of the differential equation \eqref{diffEq} that tend to infinity as $x\to\pm\infty$.

Another interesting possibility is to attempt to make use of the product formula \eqref{prodForm} in the harmonic analysis of expansions in the eigenfunctions $\psi_k$.

\section*{Acknowledgements}
We would like to acknowledge a discussion with Vadim Kuznetsov on 16th September 2005 at KTH in which he conjectured that the quartic oscillator Hamiltonian should have a simple and explicit kernel function depending on three variables in a symmetric manner. We are grateful to Simon Ruijsenaars for pointing out to us that Vadim's conjecture was likely motivated by the existence of such kernel functions for the Heun equation, and the fact that the Schr\"odinger equation for the quartic oscillator is equivalent to a triconfluent Heun equation.

We thank Tom Koornwinder for explaining to us the significance of product formulas in harmonic analysis and for his encouragement to write this paper. In addition, we are grateful to Boris Shapiro for helpful remarks on literature related to the quartic oscillator, and M.~H.~is grateful to the participants in the Mathematical Physics seminar at Loughborough University for their many helpful comments.

We also acknowledge the comments of the anonymous referee, which helped us to improve the presentation.

This work was supported by the G\"oran Gustafsson Foundation and the Swedish Research Council (VR) under contract no.~621-2010-3708.

\begin{appendix}

\section{A standard solution of Airy's equation}\label{AppA}
In this appendix we collect properties of the standard solution $\Ai(x)$ of Airy's equation \eqref{AiryEq} we have occasion to use in the main text. For further details see, e.g., Chapter 9 in \cite{Dig10} and references therein.

First of all, we recall the integral representation
\be\label{intRep}
	\Ai(x) = \frac{1}{2\pi i}\int_{\infty e^{-i\pi/3}}^{\infty e^{i\pi/3}}\exp(\zeta^3/3-\zeta x)d\zeta,
\ee
where the contour of integration consists of two rays emerging from the origin at angles $\pm\pi/3$; and the asymptotic expansion
\be\label{AiAsymp}
	\Ai(x)\sim \frac{\exp\left(-\frac{2}{3}x^{3/2}\right)}{2\pi x^{1/4}}\sum_{k=0}^\infty \frac{\Gamma(3k+1/2)}{(-9)^k(2k)!}x^{-3k/2},
\ee
which is valid uniformly as $|x|\to\infty$ in any closed subsector of the sector $|\arg x|<\pi$.

In both Section \ref{Sec3} and \ref{Sec4} the integrals
\be\label{AiryInt}
	\int_0^\infty \Ai(t+x)t^{n-1/2}dt,\quad n=0,1,\ldots
\ee
play an important role. Using the fact that $\Ai(x)$ is a solution of Airy's equation \eqref{AiryEq}, all such integrals can be reduced to the special case $n=0$. More precisely, assuming $n=1,2,\ldots$, we have
\be\label{intRed}
\begin{split}
	\int_0^\infty \Ai(t+x)t^{n-1/2}dt &= \int_0^\infty \Ai^{\prime\prime}(t+x)t^{n-3/2}dt\\ &\qquad - x\int_0^\infty \Ai(t+x)t^{n-3/2}dt\\ &= \left(\frac{d^2}{d x^2} - x\right)\int_0^\infty \Ai(t+x)t^{n-3/2}dt.
\end{split}	
\ee
Iterating this process $n$ times, we obtain
\be\label{AiryIntRed}
	\int_0^\infty \Ai(t+x)t^{n-1/2}dt = \left(\frac{d^2}{d x^2} - x\right)^n\int_0^\infty \Ai(t+x)t^{-1/2}dt.
\ee
Moreover, Aspnes \cite{Asp66} obtained the following integral evaluation:
\be\label{AspnesInt}
	\int_0^\infty \Ai(t+x)t^{-1/2}dt = 2^{2/3}\pi\Ai^2(x/2^{2/3});
\ee
see Equation (B17) and set the normalisation constant $N=\pi$. Combining \eqref{AiryIntRed} with \eqref{AspnesInt} yields an explicit evaluation of the integrals in \eqref{AiryInt}. This enables us, in particular, to determine the asymptotic behaviour of these integrals, which we make use of in both the proof of Theorem \ref{Thm} and in the discussion leading up to Proposition \ref{Prop}. The precise result now follows.

\begin{lemma}\label{AiIntLemma}
For $n=0,1,\ldots$, we have an asymptotic expansion of the form
\begin{multline}\label{AiIntAsymp}
	\int_0^\infty \Ai(t+x)t^{n-1/2}dt\\ \sim \frac{\Gamma(n+1/2)}{2\sqrt{\pi}}\frac{\exp\left(-\frac{2}{3}x^{3/2}\right)}{x^{(n+1)/2}}\left(1+\sum_{k=1}^\infty \frac{C_{n,k}}{x^{3k/2}}\right)
\end{multline}
as $|x|\to\infty$ in any closed subsector of the sector $|\arg x|<\pi$.
\end{lemma}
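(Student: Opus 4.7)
The plan is to combine the reduction formula \eqref{AiryIntRed} with Aspnes' evaluation \eqref{AspnesInt} and the Poincar\'e asymptotic \eqref{AiAsymp} of $\Ai$: the case $n=0$ follows directly from these three ingredients, and the general $n\geq 1$ follows by induction, showing that $(\partial_x^2 - x)$ preserves both the form of the expansion and its sector of uniform validity while multiplying the leading coefficient by the correct factor.

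For the base case, Aspnes' identity reduces the integral to $2^{2/3}\pi\,\Ai^2(x/2^{2/3})$. Since $x\mapsto x/2^{2/3}$ preserves $\arg x$ (as $2^{2/3}>0$), substitution of \eqref{AiAsymp} yields an expansion uniform on any closed subsector of $|\arg x|<\pi$. The Cauchy product of the two copies of the $\Ai$-series produces again a series in powers of $x^{-3/2}$, and a short computation of the leading prefactor gives
\begin{equation*}
2^{2/3}\pi\cdot\frac{2^{1/3}}{4\pi}\, x^{-1/2}\exp\bigl(-\tfrac{2}{3}x^{3/2}\bigr) = \tfrac{1}{2}\, x^{-1/2}\exp\bigl(-\tfrac{2}{3}x^{3/2}\bigr),
\end{equation*}
which agrees with $\Gamma(1/2)/(2\sqrt{\pi})\cdot x^{-1/2}\exp(-\tfrac{2}{3}x^{3/2})$, as claimed for $n=0$.

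For the inductive step, I denote the integral in \eqref{AiryInt} by $I_n(x)$ and use \eqref{AiryIntRed} in the form $I_{n+1}=(\partial_x^2-x)I_n$. A direct calculation gives the key identity
\begin{equation*}
(\partial_x^2 - x)\bigl[f(x)e^{-\frac{2}{3}x^{3/2}}\bigr] = \bigl[f''(x) - 2x^{1/2}f'(x) - \tfrac{1}{2}x^{-1/2}f(x)\bigr]e^{-\frac{2}{3}x^{3/2}}.
\end{equation*}
Applied to $f_n(x)=L_n x^{-(n+1)/2}\bigl(1+\sum_{k\geq 1}C_{n,k}x^{-3k/2}\bigr)$, each of the three resulting terms is a series in powers $x^{-(n+2)/2-3k/2}$: the $f_n''$ contribution shifts $k\mapsto k+1$, while $x^{1/2}f_n'$ and $x^{-1/2}f_n$ shift $k\mapsto k$, so the overall form is preserved with $n\mapsto n+1$. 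Reading off the leading ($k=0$) contribution, $-2x^{1/2}f_n'$ supplies a factor $n+1$ and $-\tfrac{1}{2}x^{-1/2}f_n$ subtracts a factor $\tfrac{1}{2}$, giving $L_{n+1}=(n+1/2)L_n$. This is precisely the recursion satisfied by $\Gamma(n+1/2)/(2\sqrt{\pi})$, so the base value $L_0=1/2$ propagates to the claimed prefactor for every $n$.

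The main obstacle is the analytic justification that termwise differentiation of the asymptotic series for $I_n$ produces the asymptotic series for $I_n'$ and $I_n''$, uniformly on closed subsectors of $|\arg x|<\pi$. For this I would observe that $I_n(x)$ is holomorphic on the open sector (the exponential decay of $\Ai$ and of its derivatives on vertical strips in $\C$ legitimises differentiation under the integral sign) and then appeal to the standard fact that a Poincar\'e-type expansion of an analytic function which is uniform on a closed subsector can be differentiated term by term, with the same uniformity, via Cauchy's estimates on small circles lying inside a slightly larger subsector. With this in hand, the induction closes.
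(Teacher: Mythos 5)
Your proposal is correct and follows essentially the same route as the paper's own proof: the base case $n=0$ from Aspnes' evaluation \eqref{AspnesInt} combined with the uniform expansion \eqref{AiAsymp}, and an induction in which $(\frac{d^2}{dx^2}-x)$ from \eqref{AiryIntRed} is shown to preserve the form of the expansion while multiplying the leading coefficient by $n+1/2$, matching $\Gamma(n+3/2)=(n+1/2)\Gamma(n+1/2)$. Your justification of term-by-term differentiation (holomorphy of the integral plus Cauchy estimates on slightly larger closed subsectors) is a more explicit version of the paper's closing remark that uniform validity of \eqref{AiAsymp}, and hence of \eqref{AiIntAsymp}, on closed subsectors of $|\arg x|<\pi$ permits the differentiation.
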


\begin{proof}
We shall prove the statement by induction on $n$. Substituting the asymptotic expansion \eqref{AiAsymp} for $\Ai(x)$ in the right-hand side of \eqref{AspnesInt} and using the fact that $\Gamma(1/2)=\sqrt{\pi}$, we arrive at \eqref{AiIntAsymp} for $n=0$ with the coefficients $C_{0,k}$ given by
\be\label{C0k}
	C_{0,k} = \frac{1}{\pi}\left(-\frac{2}{9}\right)^k\sum_{m=0}^k\frac{\Gamma(3m+1/2)\Gamma(3(k-m)+1/2)}{(2m)!(2k-2m)!}.
\ee

Assume that \eqref{AiIntAsymp} holds true up to some $n=0,1,\ldots$ Then appealing to \eqref{intRed} with $n\to n+1$, we infer
\begin{multline}\label{AiIntAsymp2}
	\int_0^\infty \Ai(t+x)t^{n+1/2}dt\\ \sim \frac{\Gamma(n+1/2)}{2\sqrt{\pi}}\left(\frac{d^2}{d x^2} - x\right)\frac{\exp\left(-\frac{2}{3}x^{3/2}\right)}{x^{(n+1)/2}}\left(1+\sum_{k=1}^\infty \frac{C_{n,k}}{x^{3k/2}}\right).
\end{multline}
To establish that the right-hand side is of the required form we observe
\begin{multline}
	\frac{x^{(n+2)/2}}{\exp\left(-\frac{2}{3}x^{3/2}\right)}\circ \left(\frac{d^2}{d x^2} - x\right)\circ \frac{\exp\left(-\frac{2}{3}x^{3/2}\right)}{x^{(n+1)/2}} - (n+1/2)\\ = x^{1/2}\frac{d^2}{dx^2} - \left(2x+\frac{n+1}{x^{1/2}}\right)\frac{d}{dx} + \frac{(n+1)(n+3)}{4x^{3/2}}.
\end{multline}
By a direct computation, we deduce
\begin{multline}
	\left(x^{1/2}\frac{d^2}{dx^2} - \left(2x+\frac{n+1}{x^{1/2}}\right)\frac{d}{dx}\right)\left(1+\sum_{k=1}^\infty \frac{C_{n,k}}{x^{3k/2}}\right)\\ = \frac{3C_{n,1}}{x^{3/2}} + \frac{1}{4}\sum_{k=2}^\infty \frac{12kC_{n,k}+(3k-3)(3k+2n+1)C_{n,k-1}}{x^{3k/2}}.
\end{multline}
Since $(n+1/2)\Gamma(n+1/2)=\Gamma(n+3/2)$, it follows that \eqref{AiIntAsymp2} indeed is of the form \eqref{AiIntAsymp} with $n\to n+1$. Moreover, combining the last two formulae with \eqref{C0k} we could, in principle, calculate the values of all coefficients $C_{n,k}$.

Finally, we note that term-wise differentiation of the asymptotic series is permited since the asymptotic expansion \eqref{AiAsymp}, and by inference the expansions \eqref{AiIntAsymp}, is valid uniformly in any closed subsector of the sector $|\arg x|<\pi$.
\end{proof}

\section{Bounds on a kernel function}\label{AppB}
In the proof of Theorem \ref{Thm}, as well as when inferring its Corollary \ref{Cor}, we require suitable bounds on the kernel function \eqref{cKThm} and some of its derivatives. For simplicity, we shall restrict our attention to the particularly simple kernel function $\cK(a;x,y,z)$ obtained by setting $\lambda=8$, c.f.~\eqref{kernel}. The bounds are, however, readily generalised to other values of $\lambda$ by suitable scalings of the variables and parameters involved. The following lemma contains the precise statements and proofs of the pertinent bounds.

\begin{lemma}\label{bLemma}
Let $a,y\in\R$. For each $n=0,1,\ldots$ and $\rho\in(0,1)$, there exists a positive constant $C=C_n(\rho,a,y)$ such that
\be\label{cKnBound}
	\left|\frac{\partial^n\cK(a;x,y,z)}{\partial x^n}\right| < C\exp\left(-\frac{2\rho}{3}(x^2+z^2)^{3/2}\right)
\ee
for all $x,z\in\R$.
\end{lemma}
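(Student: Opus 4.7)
The plan is to bound the $x$-derivatives of $\cK(a;x,y,z) = 2\exp(2xyz)\Ai(w)$, where $w = x^2+y^2+z^2+b$, by exploiting the cubic super-exponential decay of $\Ai$. Each $x$-derivative either hits $\exp(2xyz)$, producing a factor $2yz$, or hits $\Ai(w)$, producing $2x\Ai'(w)$; any higher derivatives of $\Ai$ can be reduced via Airy's equation $\Ai''(w) = w\Ai(w)$. A short induction on $n$ then yields the structural identity
\begin{equation*}
    \frac{\partial^n \cK}{\partial x^n} = 2\exp(2xyz)\bigl[P_n(x,y,z)\Ai(w) + Q_n(x,y,z)\Ai'(w)\bigr],
\end{equation*}
where $P_n$ and $Q_n$ are polynomials whose degrees are controlled by $n$.

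Next I would invoke two analytic ingredients. First, the asymptotic expansion \eqref{AiAsymp}, together with its term-wise derivative, provides constants $M, W_0 > 0$ such that $|\Ai(w)|$ and $w^{-1/2}|\Ai'(w)|$ are bounded by $M\exp(-\tfrac{2}{3}w^{3/2})$ for $w \geq W_0$. Second, the elementary AM--GM estimate $|2xyz| \leq |y|(x^2+z^2)$ gives $|\exp(2xyz)| \leq \exp\bigl(|y|(x^2+z^2)\bigr)$. The heart of the argument is then to fix an auxiliary $\rho' \in (\rho,1)$ and compare the combined exponent against $-\tfrac{2\rho'}{3}(x^2+z^2)^{3/2}$. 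Writing $r^2 = x^2+z^2$, one has $(r^2+y^2+b)^{3/2} \geq r^3$ for large $r$, and so
\begin{equation*}
    -\tfrac{2}{3}(r^2+y^2+b)^{3/2} + |y|r^2 + \tfrac{2\rho'}{3}r^3 \leq -\tfrac{2(1-\rho')}{3}r^3 + |y|r^2,
\end{equation*}
which is bounded above uniformly in $r \geq 0$.

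The polynomial prefactors $P_n$, $Q_n$, together with the factor $w^{1/4}$ accompanying the asymptotics of $\Ai'$, grow only polynomially in $r$; they are swallowed by the cubic exponential margin, since $\text{poly}(r)\exp(-\tfrac{2(\rho'-\rho)}{3}r^3)$ is bounded on $\R_{\geq 0}$, leaving precisely the clean factor $\exp(-\tfrac{2\rho}{3}r^3)$ demanded by \eqref{cKnBound}. In the complementary compact region where $w < W_0$, the Airy asymptotic does not apply, but $\partial^n \cK/\partial x^n$ is continuous there and so has a finite supremum that can be absorbed into $C$. The only point requiring care will be the inductive bookkeeping of the degrees of $P_n$ and $Q_n$; once that is in hand, the gap between the cubic decay and the quadratic growth $|y|r^2$ is so large that every remaining estimate is essentially automatic.
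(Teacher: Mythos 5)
Your proposal is correct and follows essentially the same route as the paper: an induction showing that $\partial^n_x\cK$ has the form $\exp(2xyz)\bigl(P_n\Ai(w)+Q_n\Ai'(w)\bigr)$ with polynomial prefactors, followed by the asymptotics of $\Ai$ and $\Ai'$ and the observation that the cubic decay absorbs the quadratic growth $|y|(x^2+z^2)$ at the cost of the factor $\rho<1$; the paper merely organizes the induction in polar coordinates $x=r\cos\theta$, $z=r\sin\theta$. One small slip: the inequality $(r^2+y^2+b)^{3/2}\geq r^3$ can fail when $y^2+b<0$, but since $(r^2+y^2+b)^{3/2}=r^3+O(r)$ as $r\to\infty$ the correction is harmlessly absorbed into the cubic margin, so the argument stands.
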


\begin{proof}
Since the Airy function $\Ai(x)$, and consequently the kernel function $\cK(a;x,y,z)$, is a smooth function, it is sufficient to establish the decay bounds
\be\label{dBound}
	\left|\frac{\partial^n\cK(a;x,y,z)}{\partial x^n}\right| = O\left(\exp\left(-\frac{2\rho}{3}(x^2+z^2)^{3/2}\right)\right)
\ee
as $x^2+z^2\to\infty$.

To this end we introduce polar coordinates
\be
	x = r\cos \theta,\quad z = r\sin \theta,
\ee
and consider the functions
\be\label{cFn}
	\cF_n(r,\theta)\equiv \frac{\partial^n\cK(a;x(r,\theta),y,z(r,\theta))}{\partial x^n}.
\ee
We find it convenient to introduce the function
\be
	\xi(r)\equiv r^2+y^2+b,
\ee
since it will appear repeatedly throughout the proof, and, in most cases, to suppress its dependence on $r$.

We observe that
\be
	\cF_0(r,\theta) = 2\exp(yr^2\sin 2\theta)\Ai(\xi),
\ee
and that each function $\cF_n$, $n=1,2,\ldots$, is given in terms of $\cF_{n-1}$ by the formula
\be
	\cF_n = \cos \theta\frac{\partial\cF_{n-1}}{\partial r} - \frac{\sin \theta}{r}\frac{\partial\cF_{n-1}}{\partial \theta}.
\ee
Proceeding by induction on $n$, it is readily verified that
\be
	\cF_n = \exp(yr^2\sin 2\theta)\big(P_n\Ai(\xi) + Q_n\Ai^\prime(\xi)\big)
\ee
for some polynomials $P_n=P_n(r)$ and $Q_n=Q_n(r)$ of degree at most $2n$ and $2n-1$, respectively, with coefficients given by polynomials in $\cos \theta$ and $\sin \theta$. Hence, we can find positive constants $C_n=C_n(a,y)$ such that
\be\label{cFnBound}
	|\cF_n(r,\theta)| < C_n(1+r^{2n})\exp(yr^2)\big(|\Ai(\xi)| + |\Ai^\prime(\xi)|\big).
\ee

From the the asymptotic expansions of $\Ai(x)$ and $\Ai^\prime(x)$ (see \eqref{AiAsymp} for the former and Eq.~(9.7.6) in \cite{Dig10} for the latter), we infer the decay bounds
\be
	\Ai(\xi) = O\left(\xi^{-1/4}\exp\left(-\frac{2}{3}\xi^{3/2}\right)\right),
\ee
\be
	\Ai^\prime(\xi) = O\left(\xi^{1/4}\exp\left(-\frac{2}{3}\xi^{3/2}\right)\right),
\ee
as $\xi\to\infty$. Moreover, we clearly have
\be\label{rExp}
	\xi(r)^{3/2} = r^3 + O(r),\quad r\to\infty.
\ee
Combining \eqref{cFnBound}--\eqref{rExp}, we conclude that

\be
	\cF_n(r,\theta) = O\left(\exp\left(-\frac{2\rho}{3}r^3\right)\right),\quad r\to\infty,
\ee
for each $\rho\in(0,1)$. Reverting to the variables $(x,z)$ and keeping \eqref{cFn} in mind, we arrive at the decay bounds \eqref{dBound}.
\end{proof}

\end{appendix}

\bibliographystyle{amsalpha}

\begin{thebibliography}{CMS93}

\bibitem[Asp66]{Asp66} D.~E.~Aspnes, \emph{Electric-field effects on optical absorption near thresholds in solids}, Phys.~Rev.~{\bf 147} (1966), 554--566.

\bibitem[BS91]{BS91} F.~A.~Berezin and M.~A.~Shubin, \emph{The Schr\"odinger equation}, Kluwer Academic Publishers, Dordrecht, 1991.

\bibitem[CMS93]{CMS93} W.~C.~Connett, C.~Markett and A.~L.~Schwartz, \emph{Product formulas and convolutions for angular and radial spheroidal wave functions}, Trans.~Amer.~Math.~Soc.~{\bf 338} (1993), 695--710.

\bibitem[Cop65]{Cop65} E.~T.~Copson, \emph{Asymptotic expansions}, Cambridge University Press, Cambridge, 1965.

\bibitem[Dig10]{Dig10} Digital Library of Mathematical Functions, Release date 2010-05-07, National Institute of Standards and Technology, http://dlmf.nist.gov.

\bibitem[HS66]{HS66} P.-F.~Hsieh and Y.~Sibuya, \emph{On the asymptotic integration of second order linear ordinary differential equations with polynomial coefficients},  J.~Math.~Anal.~Appl.~{\bf 16} (1966), 84--103.

\bibitem[KS96]{KS96} A.~Ya.~Kazakov and S.~Yu.~Slavyanov, \emph{Integral equations for special functions of Heun class}, Methods Appl.~Anal.~{\bf 3} (1996), 447--456.

\bibitem[Kor84]{Kor84} T.~H.~Koornwinder, \emph{Jacobi functions and analysis on noncompact semisimple Lie groups}, Special functions: group theoretical aspects and applications, 1--85, Math.~Appl., Reidel, Dordrecht, 1984.

\bibitem[Sim70]{Sim70} B.~Simon, \emph{Coupling constant analyticity for the anharmonic oscillator}, Ann.~Phys.~{\bf 58} (1970), 76--136.

\bibitem[SL00]{SL00} S.~Yu.~Slavyanov and W.~Lay, \emph{Special functions. A unified theory based on singularities}, Oxford University Press, Oxford, 2000.

\bibitem[Tru74]{Tru74} T.~T.~Truong, \emph{New integral equation for the quartic anharmonic oscillator}, Lett.~Nuovo Cimento {\bf 9} (1974), 533--536.

\bibitem[Tru75]{Tru75} T.~T.~Truong, \emph{Weyl quantization of anharmonic oscillators}, J.~Math.~Phys.~{\bf 16} (1975), 1034--1042.

\bibitem[Tru16]{Tru16} T.~T.~Truong, \emph{Quartic anharmonic oscillator integral properties via the 2D-quantum free fall problem}, Far East J.~Appl.~Math.~{\bf 94} (2016), 455--490.

\bibitem[Whi15]{Whi15} E.~T.~Whittaker, \emph{On Lam\'e's differential equation and ellipsoidal harmonics}, Proc.~Lond.~Math.~Soc.~{\bf 14} (1915), 260--268.

\bibitem[WW35]{WW35} E.~T.~Whittaker and G.~N.~Watson, \emph{A course of modern analysis}, 4th ed., Cambridge University Press, Cambridge, 1935.

\end{thebibliography}

\end{document}